\newcommand\se{\mathrm{se}}
\newcommand\SNR{\mathrm{SNR}}
\newcommand\E{\mathbb{E}}
\newcommand\R{\mathbb{R}}
\newtheorem{theorem}{Theorem}
\newtheorem{proposition}{Proposition}
\newtheorem{lemma}{Lemma}
\newtheorem{corollary}{Corollary}
\begin{document}

%% Here are the title, author names and addresses
\title{The Significance Filter, the Winner's Curse \\ and the Need to Shrink}

\author{E.W. van Zwet \footnote{Department of Biomedical Data Sciences, Leiden University Medical Center, Leiden, The Netherlands.} \qquad E.A. Cator \footnote{Faculty of Science, Radboud University, Nijmegen, The Netherlands.}}

\maketitle

\begin{abstract}
The ``significance filter'' refers to focusing exclusively on statistically significant results. Since frequentist properties such as unbiasedness and coverage are valid only before the data have been observed, there are no guarantees if we condition on significance. In fact, the significance filter leads to overestimation of the magnitude of the parameter, which has been called the ``winner's curse''. It can also lead to undercoverage of the confidence interval. Moreover, these problems become more severe if the power is low. While these issues clearly deserve our attention, they have been studied only informally and mathematical results are lacking. Here we study them from the frequentist and the Bayesian perspective. We prove that the relative bias of the magnitude is a decreasing function of the power and that the usual confidence interval undercovers when the power is less than 50\%. We conclude that failure to apply the appropriate amount of shrinkage can lead to misleading inferences.  
\end{abstract}

\section{Introduction}
The long-standing debate about the role of statistical significance in research \cite{rozeboom1960fallacy}, \cite{meehl1978theoretical} has recently intensified  \cite{wasserstein2016asa},\cite{benjamin2018redefine},\cite{wasserstein2019moving},\cite{mcshane2019abandon},\cite{amrhein2018remove} and \cite{ioannidis2019importance}.  Looking back to the beginning, we find that Ronald Fisher wrote in 1926 \cite{fisher1992arrangement}: 

\begin{quote}
``Personally, the writer prefers to set a low standard of significance at the 5 per cent point, and ignore entirely all results which fail to reach this level.'' 
\end{quote}

In other words, Fisher considered the familiar 5\% level to be quite liberal and recommended that results that fail to reach even that level can be safely ignored.  Now, more than 90 years later, Fisher's advice to apply the ``significance filter'' is widely followed. Recently, Barnett and Wren \cite{barnett2019examination} collected over 968,000 confidence intervals extracted from abstracts and over 350,000 intervals extracted from the full-text of papers published in Medline (PubMed) from 1976 to 2019. We converted these to $z$-values and their distribution is shown in Figure \ref{fig:z}. The under-representation of $z$-values between -2 and 2 is striking.

\begin{figure}[htp] \centering{
\includegraphics[scale=0.8]{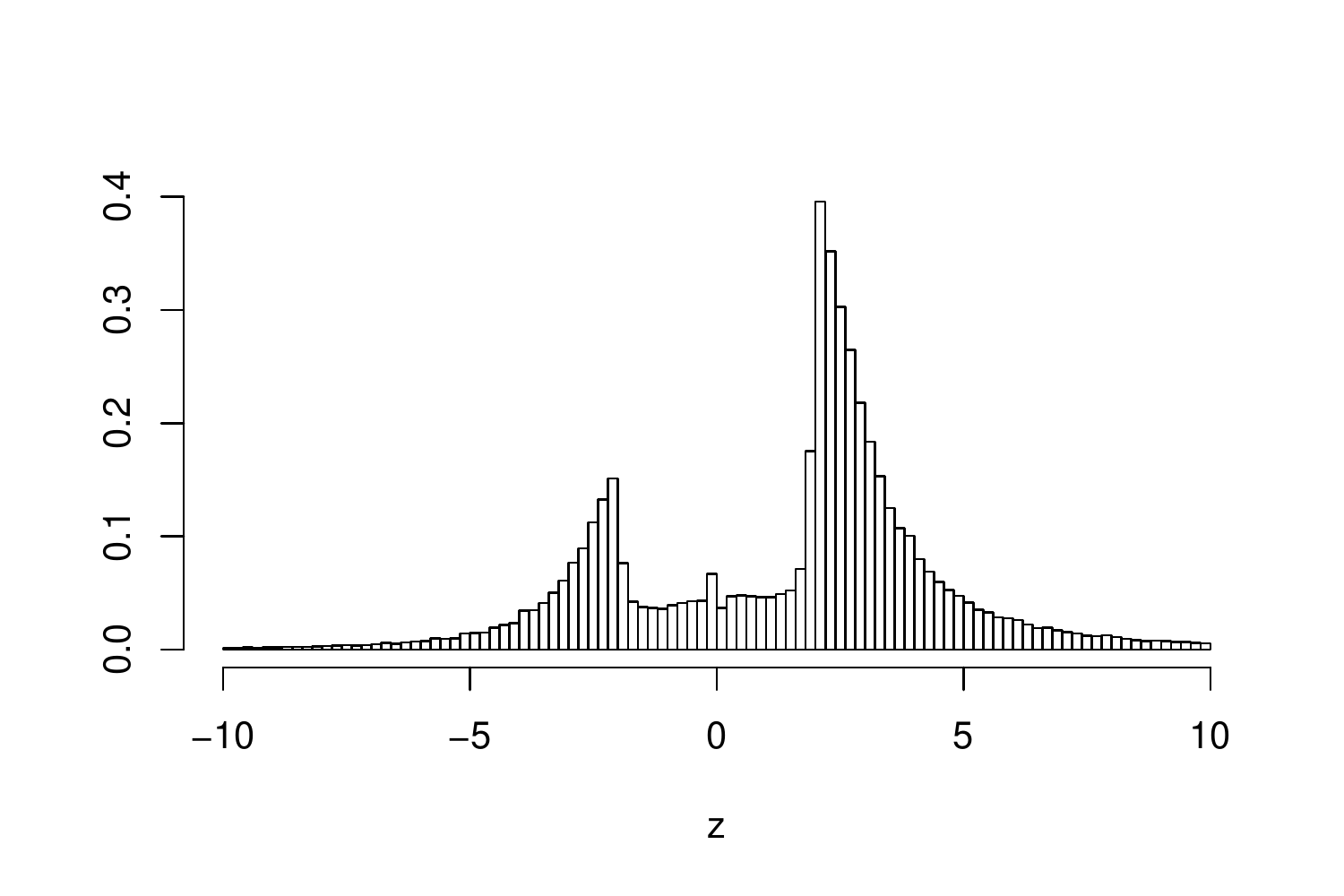}
\caption{The distribution of more than one million $z$-values from Medline (1976--2019).}\label{fig:z}
}
\end{figure}

As time and resources are always limited, it certainly makes sense to focus on significant results to avoid chasing noise. However, there is a problematic side-effect; considering only results that have reached statistical significance leads to overestimation \cite{ioannidis2008most}. This is sometimes called the ``winner's curse''. Moreover, it has been  demonstrated informally, i.e.\ by simulation, that the winner's curse is especially severe when the power is low \cite{ioannidis2008most},\cite{gelman2014beyond}. Here, we provide the first formal proof of this important fact.

As it turns out, low power is very common in the biomedical sciences \cite{button2013power},\cite{dumas2017low}. In particular, so-called pilot studies often have extremely low power. When such a study yields a significant result, the effect is likely grossly overestimated. Unfortunately, effect estimates from significant pilot studies are often used to inform the sample size calculation of a larger trial \cite{leon2011role},\cite{gelman2014beyond}. Low power also occurs when some correction is used to adjust for multiple comparisons. Such corrections are especially severe in genomics research, and the resulting overestimation of effects is well known \cite{goring2001large}. The recent suggestion to lower the significance level to improve reproducibility \cite{benjamin2018redefine} also reduces power, and therefore may backfire by aggravating the winner's curse \cite{mcshane2019abandon}.

While the winner's curse is a relatively well known phenomenon, mathematical results are lacking. In the first part of this paper, we study the winner's curse both the frequentist point of view. We prove that the relative bias in the magnitude is a decreasing function of the power. We also examine the effect of the significance filter on the coverage of confidence intervals and find it results in undercoverage when the power is less than 50\%.

In the second part of the paper, we study the significance filter from the Bayesian perspective. We conclude that it is necessary to apply shrinkage. We end the paper with a short discussion.

\section{The frequentist perspective}
Suppose that $b$ is a normally distributed,  unbiased estimator of $\beta$ with standard error $\se>0$. We have in mind that $\beta$ is some regression coefficient such as a difference of means, a slope, a log odds ratio or log hazard ratio, and we shall sometimes refer to $\beta$ as the ``effect''. 

\subsection{Bias of the magnitude}
By Jensen's inequality, $|b|$ is positively biased for $|\beta|$. Indeed, given $\beta$, $|b|$ has the folded normal distribution with mean
\begin{equation}
\E( |b| \mid \se, \beta)= |\beta| + \sqrt{\frac{2}{\pi}} \se\ e^{-\beta^2/2\se^2}  - 2 |\beta| \Phi\left( - \frac{|\beta|} {\se} \right).
\end{equation}

\begin{proposition}\label{prop:bias}
The bias $\E( |b| \mid \se, \beta) - |\beta|$ is positive for all $\se$ and $\beta$.  Moreover, it is decreasing in $|\beta|$ and increasing in $\se$.
\end{proposition}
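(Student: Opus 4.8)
The plan is to exploit the fact that the bias depends on the two parameters only through their ratio. Writing $\phi$ for the standard normal density, so that $\sqrt{2/\pi}\,e^{-u^2/2} = 2\phi(u)$, and setting $u = |\beta|/\se \ge 0$, the displayed formula for the mean rearranges to
\begin{equation}
\E(|b|\mid \se,\beta) - |\beta| = 2\,\se\, h(u), \qquad h(u) := \phi(u) - u\,\Phi(-u).
\end{equation}
All three assertions then follow from the behaviour of the single-variable function $h$ together with the chain rule, since $u$ is increasing in $|\beta|$ and decreasing in $\se$.

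The key step is a short computation of $h'$. Using $\phi'(u) = -u\,\phi(u)$ and $\tfrac{d}{du}\Phi(-u) = -\phi(u)$, the two terms involving $u\,\phi(u)$ cancel and one is left with the clean identity
\begin{equation}
h'(u) = -\,\Phi(-u).
\end{equation}
This single identity does almost all of the work.

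From here the three claims fall out. For \emph{positivity}, I would invoke the elementary Mills-ratio bound $\Phi(-u) < \phi(u)/u$ for $u>0$, which gives $u\,\Phi(-u) < \phi(u)$ and hence $h(u) > 0$; at $u=0$ one has $h(0) = \phi(0) > 0$. (Alternatively, $h' < 0$ makes $h$ strictly decreasing, and since $h(u)\to 0$ as $u\to\infty$, positivity follows.) For \emph{monotonicity in $|\beta|$} with $\se$ fixed, the chain rule gives $\partial_{|\beta|}\,\bigl[2\,\se\, h(u)\bigr] = 2\,h'(u) = -2\,\Phi(-u) < 0$. For \emph{monotonicity in $\se$} with $\beta$ fixed, differentiating the product and using $\partial_\se u = -u/\se$ yields $\partial_\se\,\bigl[2\,\se\, h(u)\bigr] = 2\bigl(h(u) - u\,h'(u)\bigr)$; substituting the formulas for $h$ and $h'$ collapses the bracket to $\phi(u)$, so the derivative equals $2\,\phi(u) > 0$.

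The only real obstacle is recognising the reduction to the one-dimensional function $h$ and spotting the cancellation that produces $h'(u) = -\Phi(-u)$; once that identity is in hand, each of the three statements is a one-line consequence. The mildest care is needed in the positivity claim, where I prefer to rely on the Gaussian tail inequality for the Mills ratio rather than on an asymptotic argument.
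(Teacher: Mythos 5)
Your proof is correct, and it takes a genuinely different route from the paper. The paper never proves Proposition~1 directly: it obtains it as the special case $c=0$ of Theorem~1, whose proof must handle conditioning on $|b|/\se>c$ and therefore works with the quotient $g(\theta,c)$ of truncated-normal integrals, comparing $D\cdot\partial N/\partial\theta$ with $N\cdot\partial D/\partial\theta$ through a chain of inequalities. Your argument instead exploits the closed form available precisely when there is no truncation: the reduction of the bias to $2\,\se\,h(u)$ with $h(u)=\phi(u)-u\,\Phi(-u)$ and $u=|\beta|/\se$, the identity $h'(u)=-\Phi(-u)$, and the collapse $h(u)-u\,h'(u)=\phi(u)$ are all correct (the cancellations check out), and each of the three claims then really is a one-line consequence, with strict monotonicity since $\Phi(-u)>0$ and $\phi(u)>0$. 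It is worth noting that the elementary fact you import, the Mills-ratio bound $u\,\Phi(-u)<\phi(u)$, is exactly the inequality $z(1-\Phi(z))-\phi(z)\leq 0$ that the paper itself establishes near the end of its proof of Theorem~1, so the two arguments share their deepest ingredient --- but where the paper uses it to get monotonicity in $\theta$, you use it for positivity (and, as you remark, positivity also follows from $h'<0$ together with $h(u)\to 0$). What your approach buys is a short, self-contained, transparent proof of the unconditional statement; what it gives up is generality: nothing in it survives conditioning on $|b|/\se>c$, because the truncated folded-normal mean has no comparably clean closed form, and that conditional case --- the significance filter --- is the paper's real object of interest. So your proof is a legitimate and arguably preferable replacement for Proposition~1 taken in isolation, but it cannot substitute for the machinery the paper builds for Theorem~1.
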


\noindent
The proposition asserts that in low powered studies (small effects and large standard errors), the magnitude of the effect tends to be overestimated. For fixed $\se$, the bias $\E( |b| \mid \se, \beta) - |\beta|$ is maximal at $\beta=0$ where it is equal to $\sqrt{2/\pi}\, \se \approx 0.8\, \se$. 

Importantly, the bias in the magnitude becomes even larger if we condition on $|b|$ exceeding some threshold. This ``significance filter'' happens when journals preferentially accept results that are statistically significant (i.e.\ $|b|>1.96 \se$) but {\em also} when authors or readers choose to focus on such promising results as per Fisher's advice. We have the following extension of Proposition 1.

\begin{theorem} 
The conditional bias $\E( |b| \mid \se, \beta, |b|/\se>c) - |\beta|$ is positive for all $\se$ and $\beta$. Moreover, it is decreasing in $|\beta|$ and increasing in $\se$ and $c$.
\end{theorem}

\noindent
We define the {\it relative} conditional bias as 
$$\frac{\E( |b| \mid \se, \beta, |b|/\se>c) - |\beta|}{|\beta|}$$ 
and the exaggeration ratio or type M error \cite{gelman2014beyond} as $\E( |b| \mid \se, \beta, |b|>c) /|\beta|$.

\begin{corollary} 
The relative conditional bias is positive and and the exaggeration factor is greater than 1. Both depend on $\beta$ and $\se$ only through the signal-to-noise ratio (SNR) $|\beta|/\se$. Both quantities are decreasing in the SNR and increasing in $c$.
\end{corollary}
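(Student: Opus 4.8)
The plan is to reduce the entire statement to Theorem 1 by standardizing. First I would set $z = b/\se$, so that $z \sim N(\mu,1)$ with $\mu = \beta/\se$, and the conditioning event $|b|/\se > c$ becomes the scale-free event $|z| > c$. Writing $g(s,c) = \E(|z| \mid |z|>c)$, the symmetry of the folded normal makes this depend on $\mu$ only through $s := |\mu| = |\beta|/\se$, the SNR. Since $\E(|b| \mid \se,\beta,|b|/\se>c) = \se\, g(s,c)$ while $|\beta| = \se\, s$, the relative conditional bias equals $g(s,c)/s - 1$ and the exaggeration ratio equals $g(s,c)/s$. This already establishes the first assertion (dependence on $\beta$ and $\se$ only through the SNR) and shows the two quantities differ by the constant $1$, so it suffices to analyze $M(s,c) := g(s,c)/s$ for $s>0$.

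Next I would translate Theorem 1 into standardized form: the conditional bias equals $\se\,(g(s,c)-s)$. Positivity of the bias gives $g(s,c) > s$, hence $M(s,c) > 1$ and the relative bias is positive; monotonicity of the bias in $c$ gives that $g(s,c)$ is increasing in $c$, so, dividing by the fixed $s$, both $M(s,c)$ and $M(s,c)-1$ are increasing in $c$. Both of these are therefore immediate consequences of Theorem 1.

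The only genuine work is the monotonicity in the SNR, namely that $M(s,c)=g(s,c)/s$ is decreasing in $s$. Here Theorem 1 supplies two facts: $g(s,c)-s$ is decreasing in $s$ (this is ``bias decreasing in $|\beta|$'' at fixed $\se$), and $g(s,c)>s$ (positivity). Writing $h(s)=g(s,c)$, the first gives $h'(s)<1$ and the second gives $h(s)/s>1$; combining, $h'(s)<1<h(s)/s$, so
$$\frac{d}{ds}\,\frac{h(s)}{s} = \frac{s\,h'(s)-h(s)}{s^2} < 0,$$
which is exactly the claim. To avoid assuming differentiability, the same follows directly: for $s_1<s_2$ the decreasingness of $h-s$ yields $h(s_2)-h(s_1) < s_2-s_1$, whence $h(s_2)s_1 < h(s_1)s_1 + (s_2-s_1)s_1 \le h(s_1)s_1 + (s_2-s_1)h(s_1) = h(s_1)s_2$ using $h(s_1)>s_1$; dividing by $s_1 s_2$ gives $M(s_2,c)<M(s_1,c)$.

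The conceptual point, and the step I expect to be the crux, is precisely this last deduction: knowing that the \emph{absolute} bias $g-s$ shrinks as the SNR grows does \emph{not} by itself force the \emph{relative} bias $g/s-1$ to shrink; one must additionally invoke positivity ($h>s$) to pass from ``increments below those of the identity'' to ``ratio below its value''. Everything else is bookkeeping via the standardization. I would also flag that the exaggeration ratio should be read with the scale-free threshold $|b|/\se>c$ (rather than $|b|>c$) for the ``depends only on the SNR'' claim to hold, this being the natural reading consistent with Theorem 1.
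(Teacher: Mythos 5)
Correct, and essentially the paper's own approach: the paper likewise standardizes so that everything reduces to Theorem 1, via the function $g(\theta,c)$ (defined in the appendix as the conditional \emph{bias}, so the relative bias is $g(\theta,c)/\theta$ and the exaggeration factor is $g(\theta,c)/\theta+1$), and deduces all the claims from the positivity, decreasingness in $\theta$, and increasingness in $c$ proved there. Your explicit argument that ``bias decreasing \emph{plus} bias positive implies the ratio is decreasing'' is exactly the step the paper leaves implicit (and it is indeed needed, as decreasingness alone would not suffice), and your reading of the threshold as the scale-free event $|b|/\se>c$ matches the paper's appendix formulation.
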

We illustrate this result in Figure \ref{fig:m}. Now the power for two-sided testing of $H_0:\beta=0$ at
level 5\% is
$$P(|b|>1.96\, \se \mid \beta,\se) = \Phi(\SNR-1.96) + 1 - \Phi(\SNR+1.96),$$
which is a strictly increasing function of the SNR. Hence, the relative conditional bias and the exaggeration factor are decreasing functions of the power, as was already noted on the basis of simulation in  \cite{ioannidis2008most} and \cite{gelman2014beyond}. 

\begin{figure}[htp] \centering{
\includegraphics[scale=0.8]{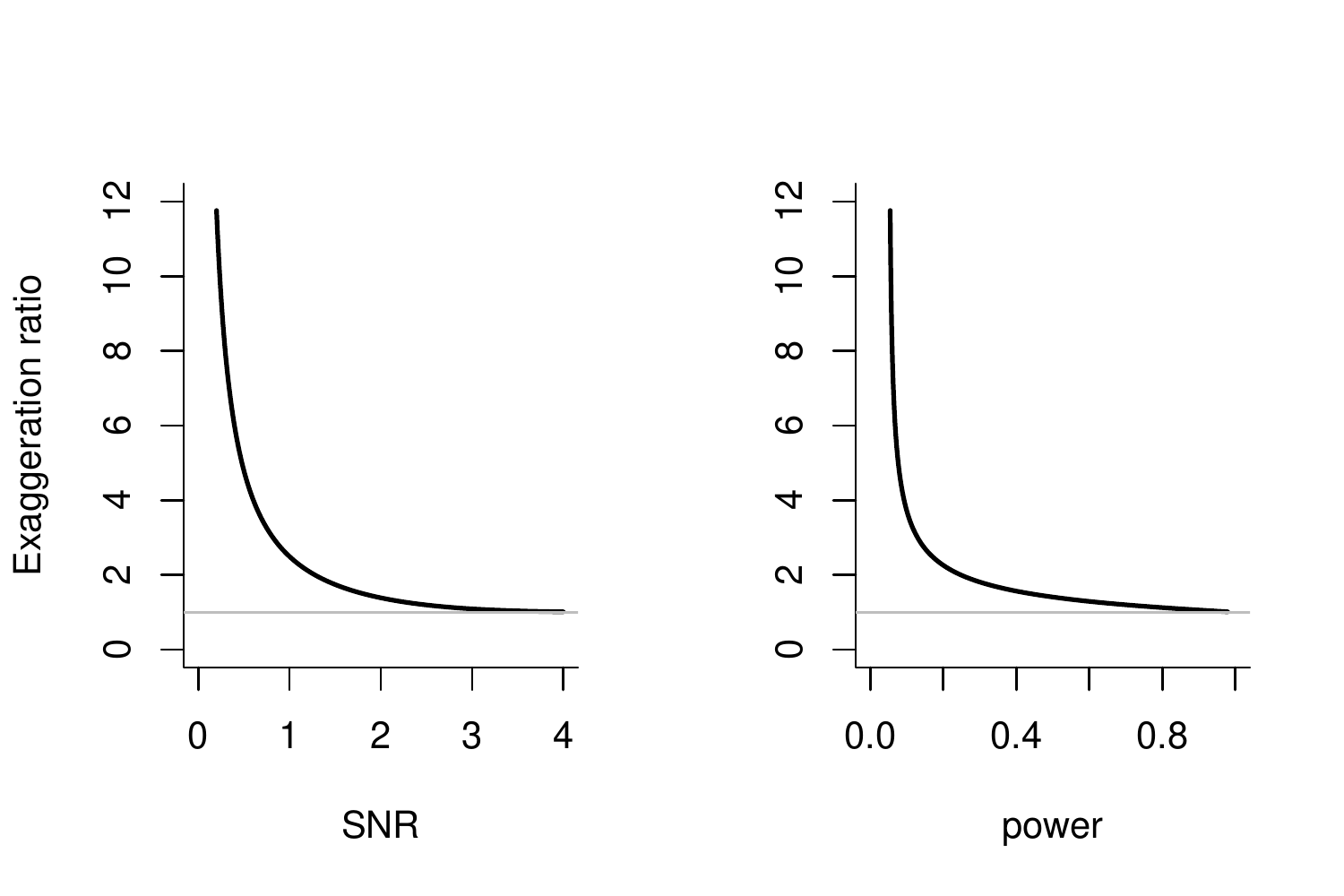}
\caption{The exaggeration factor as a function of the SNR and the power, when conditioning on significance at the 5\% level ($c=1.96$).}\label{fig:m}
}
\end{figure}

\subsection{Coverage}
The significance filter also has consequences for the coverage of confidence intervals. We start by recalling their definition. Suppose a random variable $X$ is distributed according to some distribution $f_\theta$. A $(1-\alpha) \times 100\%$ confidence set $S(X)$ is a random subset of the parameter space such that 
$$P(\theta \in S(X) \mid \theta)=1-\alpha,$$
for all $\theta$ \cite{lehmann2006testing}. A negatively biased semi-relevant (or recognizable) set $R$ is a subset of the sample space such that
$$P(\theta \in S(X) \mid \theta, X \in R)<1-\alpha,$$
for all $\theta$ . It is quite problematic if such a set $R$ exists, for is it still reasonable to report $S(X)$ with $(1-\alpha) \times 100\%$ confidence, after the event $X \in R$ has been observed?

Semi-relevant sets have been constructed in various situations, most notably in case of the standard one-sample $t$-interval \cite{lehmann2006testing}. Lehmann \cite{lehmann2006testing} called the existence of certain relevant sets ``an embarrassment to confidence theory''. Now suppose $b$ is normally distributed with mean $\beta$ and known standard deviation $\se$. If we define
$$S(b)=\{ \beta : |b-\beta|/\se < z_{1-\alpha/2}  \}$$
where $z_{1-\alpha/2}$ is the $1-\alpha/2$ quantile of the standard normal distribution, then we have the following confidence statement
$$P(\beta \in S(b) \mid \beta,\se) = P(|b-\beta|/\se < z_{1-\alpha/2}  \mid \beta,\se)=1-\alpha,$$
for all $0 < \alpha < 1$, $\beta$ and $\se>0$.  Lehmann \cite{lehmann2006testing} shows that in this particular setting, there do not exist any negatively biased semi-relevant  sets. This is certainly reassuring. However, we if $c>0$, then the conditional coverage
$$P(|b-\beta|/ \se < z_{1-\alpha/2} \mid \beta, \se, |b|/\se>c)$$
depends on $\beta$ and $\se$. This dependence is not simple. For instance, it is {\em not} monotone in $\beta$. We do have the following Theorem.

\begin{theorem}
Suppose $b$ is normally distributed with mean $\beta$ and standard deviation $\se$. If the SNR  $|\beta|/\se$ is less than $z=z_{1-\alpha/2}$ then
\begin{equation}
P(|b-\beta| / \se< z  \mid \beta, \se, |b|/\se>z) < P(|b-\beta| / \se< z \mid \beta, \se) =1-\alpha.
\end{equation}
\end{theorem}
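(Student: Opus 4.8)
The plan is to standardize and reduce the two-sided statement to a one-dimensional monotonicity argument. First I would set $W = b/\se$, so that $W \sim N(\mu,1)$ with $\mu = \beta/\se$, and observe that replacing $(b,\beta)$ by $(-b,-\beta)$ leaves both the coverage event and the significance event invariant; hence I may assume $\mu \ge 0$, i.e.\ $\mu$ equals the SNR $|\beta|/\se$, which by hypothesis satisfies $0 \le \mu < z$. In these coordinates the coverage event $\{|b-\beta|/\se < z\}$ becomes $\{\mu - z < W < \mu + z\}$ and the conditioning event is $\{|W| > z\}$.

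The key geometric observation is that, because $0 \le \mu < z$, the coverage interval $(\mu - z,\ \mu + z)$ has left endpoint $\mu - z \ge -z$ and therefore never reaches the left significance tail $(-\infty,-z)$; its intersection with $\{|W|>z\}$ is the single interval $(z,\ \mu + z)$. Writing $\phi,\Phi$ for the standard normal density and c.d.f., this yields the conditional coverage
\begin{equation}
g(\mu) = \frac{\Phi(z) - \Phi(z-\mu)}{1 - \Phi(z-\mu) + \Phi(-z-\mu)},
\end{equation}
whose denominator is $P(|W|>z)>0$. Since $1-\alpha = 2\Phi(z)-1$, clearing the positive denominator shows the claim $g(\mu) < 1-\alpha$ is equivalent to positivity of
$$F(\mu) := \big(2\Phi(z)-1\big)\,\big[\,1 - \Phi(z-\mu) + \Phi(-z-\mu)\,\big] - \big[\,\Phi(z) - \Phi(z-\mu)\,\big].$$

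I would then establish $F(\mu) > 0$ on $[0,z)$ by monotonicity rather than by direct estimation. Differentiating,
$$F'(\mu) = 2\big(\Phi(z)-1\big)\,\phi(z-\mu) - \big(2\Phi(z)-1\big)\,\phi(z+\mu),$$
and since $\Phi(z) > \tfrac12$ both coefficients are negative, so $F'(\mu) < 0$ and $F$ is strictly decreasing. Hence for every $\mu < z$ we have $F(\mu) > F(z)$, and it remains only to check the endpoint: a short calculation collapses the boundary value to $F(z) = \big(2\Phi(z)-1\big)\,\Phi(-2z) > 0$. Combining, $F(\mu) > F(z) > 0$ for $0 \le \mu < z$, which is exactly the asserted strict undercoverage.

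I expect the main obstacle to be the geometric bookkeeping of the second step — correctly locating the overlap of the coverage interval with the two-sided significance region and confirming that it remains confined to a single tail precisely when the SNR is below $z$ — together with the structural insight that makes the remainder routine: that $F$ is monotone, so the whole inequality reduces to the single transparent endpoint evaluation $F(z) = (2\Phi(z)-1)\Phi(-2z)$. The non-monotonicity in $\beta$ flagged just before the theorem is a warning that a naive ``differentiate the coverage in $\beta$ and sign it'' attempt would stall; routing through $F$ and exploiting its monotonicity in $\mu$ sidesteps that difficulty.
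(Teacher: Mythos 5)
Your proof is correct. Its analytic skeleton is the same as the paper's---reduce to a one-parameter family indexed by the standardized mean $\mu\in[0,z]$, exhibit a monotone surrogate for the conditional coverage, and finish by evaluating at the endpoint $\mu=z$---but the setup is genuinely different. The paper passes to the complementary (non-coverage) event and applies Bayes' theorem, so that the conditioning event $|X-\mu|>z$ has probability $\alpha$ for every $\mu$ and splits by symmetry into two equal halves; this buys a constant denominator, and the quantity being differentiated stays interpretable as the probability difference $P(|X|>z\mid|X-\mu|>z)-P(|X|>z)$. You instead compute the conditional coverage head-on, using the geometric fact that for $0\le\mu<z$ the coverage interval $(\mu-z,\mu+z)$ meets the significance region only in the right tail, giving the single interval $(z,\mu+z)$, and then you clear the $\mu$-dependent denominator before differentiating. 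That clearing step is exactly what rescues the argument: the conditional coverage itself is not monotone in $\mu$ (as the paper warns just before the theorem), but your cleared form $F$ is, just as the paper's probability difference is. The two endpoint evaluations agree up to the positive factor $2\Phi(-z)$: your $F(z)=(2\Phi(z)-1)\Phi(-2z)$ versus the paper's $\Phi(-2z)/(2\Phi(-z))-\Phi(-2z)$. What your route buys is that it is more elementary and self-contained (no conditioning swap, no symmetry-of-the-conditioning-event step), and it covers the SNR~$=0$ case seamlessly, which the paper's ``assume $\beta>0$'' reduction technically skips; what the paper's route buys is that every intermediate quantity remains a probability, and the differentiation is marginally lighter. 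One trivial slip worth fixing: the negativity of the first coefficient $2(\Phi(z)-1)$ in $F'$ follows from $\Phi(z)<1$, not from $\Phi(z)>\tfrac12$; the latter is only needed for the second coefficient.
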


Note that if the SNR is equal to $z_{1-\alpha/2}$, then the power for testing $H_0 : \beta = 0$ at level $\alpha$ is slightly more than 50\%. So the Theorem implies that if we have a significant result while the power is 50\% or less, then the confidence interval will not reach its nominal coverage.

The result is quite sharp. By inspecting the proof, we can see that if the SNR is slightly larger than $z_{1-\alpha/2}$ then the conditional coverage {\em exceeds} the nominal (unconditional) coverage.

\section{The Bayesian Perspective}
Bayesian inference is valid conditionally on the data, and so the significance filter should not pose any difficulties. On the other hand, Bayesian estimators are naturally biased. In this section we compare the performance of the unbiased estimator $b$ and the Bayes estimator. 

Let us assume that $\beta$ has a normal prior distribution with mean 0 and known standard deviation $\tau >0$. The conditional distribution of $\beta$ given $b$ is normal with mean $b^*=\tau^2 b/(\se^2 + \tau^2)$ and variance $v=\se^2\tau^2/(\se^2 + \tau^2)$. We will write $s=\sqrt{v}$. Note that $b^*$ is the Bayes estimator (under squared error loss) of $\beta$. Clearly, $|b^*|<|b|$ and for that reason $b^*$ is called a shrinkage estimator.

We can evaluate $b$ and $b^*$ as estimators of $\beta$ conditionally on the parameter and averaged over the distribution of the  data, which is the frequentist point of view. Alternatively, we can condition on the data and average over the distribution of the parameter, which is the Bayesian point of view. We have the following nicely symmetric situation, where we consider $\se$ and $\tau$ to be fixed and known.

\begin{align}
\E(b - \beta \mid \beta) &= 0 &\text{and}\quad \E(b^* - \beta \mid \beta) &= -\frac{\se^2}{\se^2 + \tau^2} \beta \\
\E(b - \beta \mid b) &= \frac{\se^2}{\se^2 + \tau^2}b &\text{and} \quad \E(b^* - \beta \mid b) &= 0 
\end{align}
So, from the frequentist point of view, $b$ is unbiased for $\beta$ and $b^*$ is biased. However, from the Bayesian point of view, it is the other way around!  

\subsection{Bias of the magnitude}
Now, if we are interested in the magnitude of $\beta$, then we could take the posterior mean of $|\beta|$ as an estimator. However, it is still relevant to evaluate the performance of $|b^*|$ as an estimator of $|\beta|$ from the Bayesian point of view. Conditionally on $s$ and $b^*$, $\beta$ has the normal distribution with mean $b^*$ and standard deviation $s$ and hence $|\beta|$ has the folded normal distribution. Similarly to Proposition 1, we have the following.
%\begin{equation}
%\E( |\beta| \mid s,b^*)= |b^*| + \sqrt{\frac{2}{\pi}} s e^{-{b^*}^2/2 s^2}  - 2 |b^*| \Phi\left( - \frac{|b^*|} {s} \right).
%\end{equation}

\begin{proposition} 
The difference  $\E( |\beta| \mid s, b^*) - |b^*|$ is positive. It is decreasing in $|b^*|$ and increasing in $s$. Moreover, the difference vanishes as $|b^*|$ tends to infinity.
\end{proposition}

\noindent
So, conditionally on the data, $|b^*|$ underestimates $|\beta|$ on average, but the difference disappears if we focus on large or significant effects. So now the significance filter actually {\em reduces} the bias in the magnitude! In other words, shrinkage lifts the winner's curse.

\bigskip \noindent
So far, we have conditioned either on the parameter or the data, and averaged over the other. However, in practice we do not keep the parameter fixed and repeat the experiment many times. We also do not keep the data fixed and vary the parameter. So, it is also relevant to consider the performance of $b$ and $b^*$ on average over the distribution of {\em both} the parameter {\em and} the data. If the distribution of the parameter represents some field of research, then this averaging will provide insight into how our statistical procedures perform when used repeatedly in that field.

Under our simple model, the marginal distribution of $b$ is normal with mean zero and variance $\se^2+\tau^2$ and the marginal distribution of $b^*$ is normal with mean zero and variance $\tau^4/(\se^2+\tau^2)$.  So, trivially, $\E(b)=\E(b^*)=\E(\beta)=0$. Moreover, it is easy to see that the variance of $b^*$ is less than the variance of $b$. Marginally, $|\beta|$, $|b|$ and $|b^*|$ have half-normal distributions with means
\begin{equation}
\E \left| b^* \right| =\frac{\tau^2}{\sqrt{\se^2 + \tau^2}}  \sqrt{\frac{2}{\pi}}, \quad \E|\beta| = \tau \sqrt{\frac{2}{\pi}},  \quad \E|b| = \sqrt{\se^2 + \tau^2}  \sqrt{\frac{2}{\pi}}
\end{equation}
It is easy to see that
\begin{equation}
\E \left| b^* \right|  < \E |\beta| < \E|b|.
\end{equation}
Negative bias is more conservative than positive bias, and that may be preferable in many situations. It is interesting to note that the factor by which $|b|$ {\em over}estimates $|\beta|$ is the same as the factor by which $|b^*|$ {\em under}estimates it. That is,
\begin{equation}
\frac{\E|b|}{ \E|\beta|} = \frac{ \E|\beta|}{\E|b^*|} = \frac{\sqrt{\se^2 + \tau^2}}{\tau} .
\end{equation}
Moreover, the following proposition says that the bias of $|b^*|$ is smaller (on average) than the bias of $|b|$.
\begin{proposition}
Suppose $\beta$ has a normal prior distribution with mean 0 and standard deviation $\tau >0$. Suppose that conditionally on $\beta$, $b$ is normally distributed with mean $\beta$ standard error $\se>0$. Let $b^*=\E(\beta \mid b)$, then
\begin{equation}
\E( |b| - |\beta|) > \E(|\beta| -  |b^*|).
\end{equation}
\end{proposition}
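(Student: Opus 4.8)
The plan is to reduce the claim to a single clean inequality between the three half-normal means that the paper has already computed. First I would observe that the desired inequality $\E(|b| - |\beta|) > \E(|\beta| - |b^*|)$ is, after simply rearranging terms, equivalent to
$$\E|b| + \E|b^*| > 2\,\E|\beta|.$$
In words, the content of the proposition is that the positive bias of $|b|$ and the negative bias of $|b^*|$ do not cancel out: the arithmetic mean of $\E|b|$ and $\E|b^*|$ strictly exceeds $\E|\beta|$. This reformulation is the step that makes the rest routine.

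Next I would invoke the equal-ratio identity already displayed in the excerpt, namely
$$\frac{\E|b|}{\E|\beta|} = \frac{\E|\beta|}{\E|b^*|} = \frac{\sqrt{\se^2 + \tau^2}}{\tau}.$$
This says precisely that $\E|\beta|$ is the \emph{geometric} mean of $\E|b|$ and $\E|b^*|$, i.e.\ $\E|\beta| = \sqrt{\E|b|\cdot\E|b^*|}$ (one checks directly that $\E|b|\,\E|b^*| = \tau^2\cdot(2/\pi) = \E|\beta|^2$). With this in hand the proposition becomes a one-line consequence of the AM--GM inequality applied to the two positive numbers $\E|b|$ and $\E|b^*|$: their arithmetic mean is at least their geometric mean $\E|\beta|$, with equality if and only if $\E|b| = \E|b^*|$.

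Finally I would confirm strictness. Equality in AM--GM would force $\E|b| = \E|b^*|$, equivalently $\sqrt{\se^2+\tau^2}/\tau = 1$, i.e.\ $\se = 0$; since $\se > 0$ by assumption, the common ratio exceeds $1$ and the inequality is strict. The main obstacle here is essentially absent, since all the real work — evaluating the half-normal means and establishing the equal-ratio identity — has already been done earlier in the paper, leaving only the observation that the statement is an AM--GM comparison in disguise. Should a fully self-contained computation be preferred, one could instead set $r = \sqrt{\se^2+\tau^2}/\tau > 1$, write the two sides as $(r-1)\,\E|\beta|$ and $(1 - 1/r)\,\E|\beta|$ using the ratio identity, and note that $r - 1 > (r-1)/r$ because $r > 1$ and $\E|\beta| > 0$.
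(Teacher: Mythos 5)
Your proof is correct and follows essentially the same route as the paper: both reduce the claim to the closed-form half-normal means $\E|b|$, $\E|\beta|$, $\E|b^*|$ and then verify an elementary algebraic inequality. Your AM--GM packaging (via the observation that $\E|\beta|$ is the geometric mean of $\E|b|$ and $\E|b^*|$) is just a conceptual relabeling of the paper's computation, which multiplies through by $\sqrt{\se^2+\tau^2}$ and exhibits the perfect square $(\sqrt{\se^2+\tau^2}-\tau)^2$ --- exactly the standard proof of AM--GM in this instance, including the same equality case $\se=0$.
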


\noindent
Most importantly, however, while the bias of $|b|$ increases as we condition on $|b|$ exceeding some threshold, the bias of $|b^*|$ vanishes!
\begin{theorem} 
As $c$ goes to infinity, $\E(|b^*| -|\beta| \mid |b|>c)$ vanishes.
\end{theorem}

\subsection{Coverage}
We now return to the coverage issue we discussed in section 2.2. It might seem that Theorem 2 is not much of a problem in practice because conditional on a significant result, the power is unlikely to be small. But such an argument would depend on the  (prior) distribution of the signal-to-noise ratio $|\beta|/\se$. We have the following result.

\begin{theorem}
Suppose $\beta$ and $\se$ are distributed such that the SNR $|\beta|/\se$ has a decreasing density and $|\beta|/\se$ and $\se$ are independent. Also suppose that conditionally on $\beta$ and $\se$, $b$ is normally distributed with mean $\beta$ and standard deviation $\se$. For every $0<\alpha < 1$
\begin{equation}
P(|b-\beta| < z_{1-\alpha/2}\se \mid  |b|/\se>c) < 1-\alpha.
\end{equation}
\end{theorem}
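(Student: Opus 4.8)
The plan is to integrate the pointwise conditional coverage against the prior on the signal-to-noise ratio, exploiting that a decreasing density overweights the low-SNR region on which coverage fails. First I would reduce to the SNR. Writing $\theta=\beta/\se$ and $Z=b/\se$, conditionally on $(\beta,\se)$ we have $Z\sim N(\theta,1)$, and both events $\{|b-\beta|<z\,\se\}=\{|Z-\theta|<z\}$ (with $z=z_{1-\alpha/2}$) and $\{|b|/\se>c\}=\{|Z|>c\}$ depend on the data only through $Z$ and on the parameters only through $\theta$. Hence $\se$ integrates out entirely, and by the symmetry $\theta\mapsto-\theta$ (equivalently $Z\mapsto-Z$) everything depends only on $m=|\theta|=|\beta|/\se$, whose marginal density $f$ is decreasing by hypothesis (the independence of $|\beta|/\se$ and $\se$ merely confirms that this marginal density is the prescribed decreasing one). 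Setting $N(m)=P(|Z-m|<z,\,|Z|>c\mid\theta=m)$ and $D(m)=P(|Z|>c\mid\theta=m)$, the quantity to bound is $\int_0^\infty Nf\,/\,\int_0^\infty Df$, so it suffices to prove $\int_0^\infty\phi(m)f(m)\,dm<0$ with $\phi=N-(1-\alpha)D$.

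Second, I would record two facts about $\phi$, where $\varphi,\Phi$ denote the standard normal density and distribution function. Complementing the interval and the significance region gives the convenient form $\phi(\theta)=(1-\alpha)p(\theta)-r(\theta)$, where $p(\theta)=P(|Z|\le c\mid\theta)$ and $r(\theta)=P(|Z|\le c,\,|Z-\theta|<z\mid\theta)$ with $Z\sim N(\theta,1)$. A one-line Fubini computation, using $\int_{\R}\varphi(x-\theta)\,d\theta=1$ and $\int_{\{|x-\theta|<z\}}\varphi(x-\theta)\,d\theta=1-\alpha$, then yields $\int_{\R}p=2c$ and $\int_{\R}r=2c(1-\alpha)$, hence $\int_{\R}\phi=0$ and, by evenness, $\int_0^\infty\phi(m)\,dm=0$.

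Third comes the crux, which refines Theorem 2: $\phi$ changes sign exactly once on $[0,\infty)$, from negative to positive. I would obtain this by evaluating $r$ explicitly on the (at most three) intervals cut out by the breakpoints $\theta\in\{|c-z|,\,c+z\}$ at which $c-\theta$ and $-c-\theta$ cross $\pm z$. On the initial interval one finds $r=p$ (or $r=1-\alpha$), so $\phi<0$; on the final interval $\theta>c+z$ the significance constraint holds automatically on $\{|Z|\le c\}$, giving $r=0$ and $\phi=(1-\alpha)p>0$; on the middle interval $r(\theta)=\Phi(c-\theta)-\Phi(-z)$, and direct differentiation gives $\phi'(\theta)=\alpha\,\varphi(c-\theta)+(1-\alpha)\,\varphi(c+\theta)>0$, so $\phi$ increases from its negative value to its positive value and crosses zero once. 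This case analysis is the main obstacle, precisely because the conditional coverage $N/D$ is not monotone; the point is that the simpler auxiliary function $\phi$ is single-crossing even when $N/D$ is not.

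Finally I would combine these ingredients through the decreasing density. Since $\phi$ is first negative then positive and $\int_0^\infty\phi=0$, the partial integral $\Psi(t)=\int_0^t\phi(m)\,dm$ starts at $0$, decreases to a strictly negative minimum, and increases back to $\Psi(\infty)=0$, so $\Psi(t)<0$ for all $t\in(0,\infty)$. Writing the decreasing density as $f(m)=\nu([m,\infty))$ for a finite nonnegative measure $\nu$ with $\nu((0,\infty))>0$ (a decreasing density is exactly a mixture of uniforms on $[0,t]$), Fubini gives $\int_0^\infty\phi(m)f(m)\,dm=\int_{[0,\infty)}\Psi(t)\,d\nu(t)<0$, which is the claim. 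The hypothesis is used only here: the decreasing density concentrates weight on the low-SNR region where $\Psi<0$, and the identity $\int_0^\infty\phi=0$ is what prevents the high-SNR overcoverage from compensating.
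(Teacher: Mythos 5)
Your proof is correct, but it takes a genuinely different route from the paper's. The paper proceeds by a Bayes flip: it reduces the claim to showing that conditioning on the coverage event makes significance \emph{less} likely, i.e.\ $P(|Z+\mu|>c \mid |Z|<z) < P(|Z+\mu|>c)$, proves this by showing that $g(t)=P(|Z+\mu|>c \mid |Z|=t)$ is increasing in $t$ --- this is exactly where the decreasing density $f$ of $|\mu|$ enters, via $g'(t)=f(|t-c|)-f(t+c)\ge 0$ --- and then invokes a small lemma that conditioning an increasing function on a lower tail lowers its expectation. The theorem then follows by applying this proposition conditionally on $\se$ and averaging, and that last step is where the paper uses the assumed independence of $|\beta|/\se$ and $\se$. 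You instead integrate out $(\beta,\se)$ at the start, reducing everything to the marginal density of the SNR; as you note, independence then becomes superfluous, so you in fact prove a slightly stronger statement. Your three ingredients --- the balance identity $\int_0^\infty \phi(m)\,dm=0$ (a structural fact the paper never observes: under a flat weighting of the SNR, under- and over-coverage cancel exactly), the single crossing of $\phi$ via explicit normal computations (with $\phi'(\theta)=\alpha\varphi(c-\theta)+(1-\alpha)\varphi(c+\theta)>0$ on the middle interval), and the representation of a decreasing density as a mixture of uniforms --- all check out; I verified the case analysis at the breakpoints $|c-z|$ and $c+z$, the continuity of $\phi$ there, and the Fubini identities. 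The trade-offs: the paper's proposition is shorter and more general (it only needs the noise to be symmetric and fully supported, not Gaussian), whereas your argument is tied to normality through the explicit formulas but yields sharper information (single crossing, the exact balance, and the dispensability of the independence hypothesis). One small technical slip: the measure $\nu$ in $f(m)=\nu([m,\infty))$ need not be \emph{finite}, since a decreasing density may blow up at the origin (e.g.\ $f(m)=\tfrac12 m^{-1/2}$ on $(0,1]$); this is harmless, because the double integral in your final step is absolutely convergent ($|\phi|$ is bounded and $f$ integrates to one), so the interchange and the strict inequality $\int_{(0,\infty)}\Psi\,d\nu<0$ go through for $\sigma$-finite $\nu$.
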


This result suggest that across research fields where of the SNR has a decreasing density, confidence interval undercover on average. But how realistic is it to assume such a decreasing density? Clearly, it would imply a decreasing density of the absolute $z$-value, and this is certainly not the case in  Figure \ref{fig:z}. However we believe that this is due to selective reporting. 

We have made an effort to collect an unselected sample of $z$-values as follows. It is a fairly common practice in the life sciences to build multivariate regression models by ``univariable screening''. First, the researchers run a number of univariable regressions for all predictors that they believe could have an important effect. Next, those predictors with a $p$-value below some threshold are selected for the multivariate model. While this approach is statistically unsound, we believe that the univariable regressions should be largely unaffected by selection on significance, simply because that selection is still to be done. For further details, we refer to \cite{van2019default}. We do note that in that article, we discarded $p$-values below 0.001, but these are included here. 

We have collected 732 absolute $z$-values from 51 recent articles from Medline. We show the distribution in Figure \ref{fig:z2} which suggest a decreasing distribution of the absolute $z$-values, which implies a that the distribution of the SNR is decreasing as well. 

\begin{figure}[htp] \centering{
\includegraphics[scale=0.8]{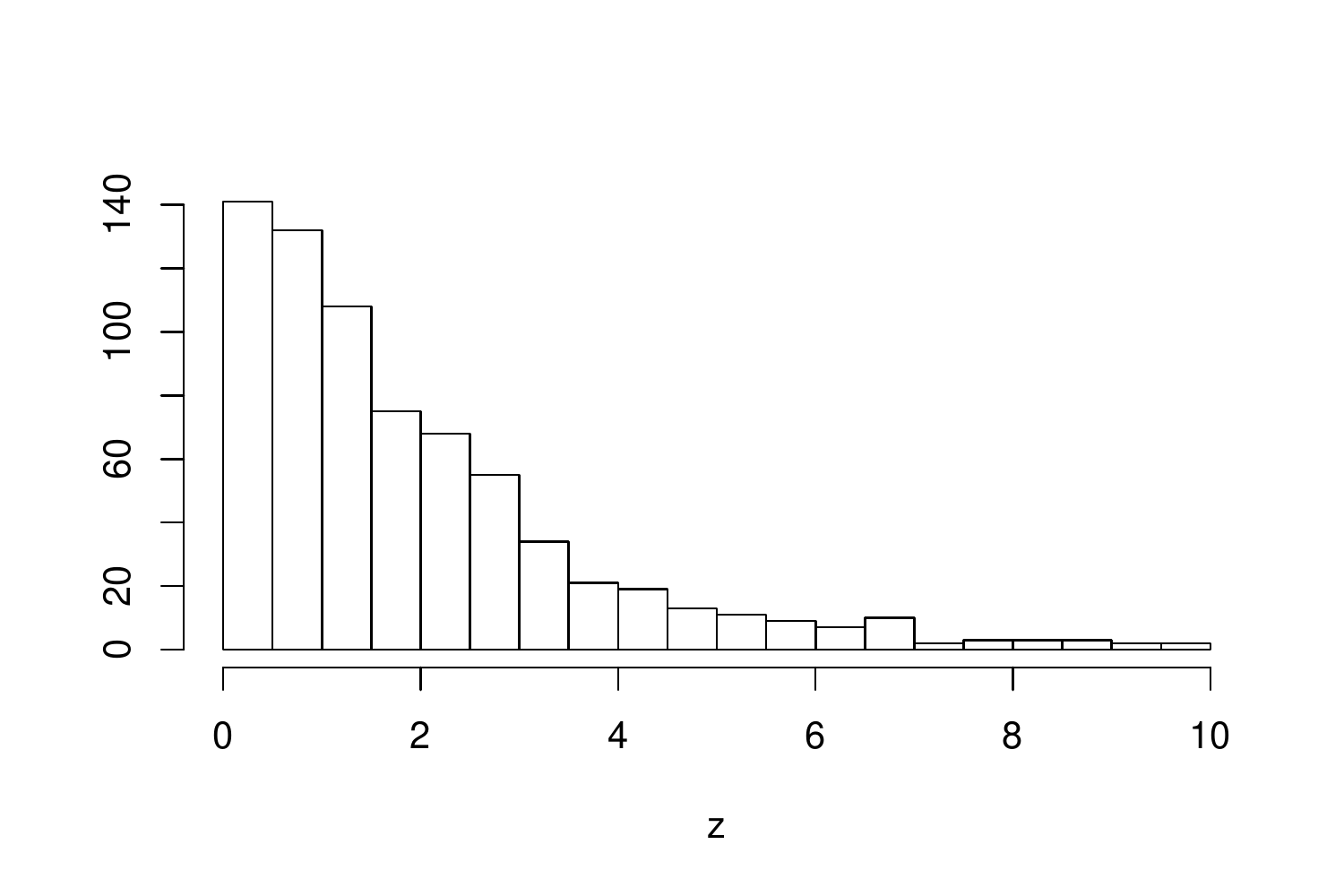}
\caption{The distribution of 732 absolute $z$-values from Medline. Here we made an effort to avoid the significance filter.}\label{fig:z2}
}
\end{figure}

\section{Discussion}
In this paper we have considered the generic situation where we have an unbiased, normally distributed estimator $b$ of a parameter $\beta$, with known standard error $\se$. Frequentist properties, such as the unbiasedness of $b$ and the coverage of the confidence interval are only meaningful before the data have been observed. Once the data are in, they become meaningless since $b$ is just some fixed number and the confidence interval either covers $\beta$ or it does not. Nothing more can be said without specifying a (prior) distribution for $\beta$.

However, suppose we condition not on $(b,\se)$ but only on the event $|b|>1.96\, \se$. That is, we condition on statistical significance at the 5\% level. Now $b$ is still random and we can talk about bias and coverage.  Conditionally on significance, $b$ is biased away from zero. This tendency to overestimate the magnitude of significant effects is sometimes called the ``winner's curse''. It is especially severe when the signal-to-noise ratio $|\beta|/\se$ is low. Also, if the SNR is low, then conditionally on significance the confidence interval will undercover. By providing mathematical proofs of these facts, we hope to contribute to the awareness of these very serious problems.

The goal of hypothesis testing is to try to avoid chasing noise, which is perfectly reasonable. However, the consequence of focusing on significant results is that all the nice frequentist properties no longer hold. Many proposals have been made to address this issue.  From a frequentist point of view, one could condition throughout on statistical significance. See, for example, \cite{ghosh2008estimating} and references therein. Alternatively, one can take a Bayesian approach, such as proposed by \cite{xu2011bayesian} and ourselves \cite{van2019default}. Of course, the Bayesian approach relies on correct specification of the prior. 

Shrinkage is often viewed as a method to achieve a lower mean squared error by reducing the variance at the expense of  increasing the bias. Our most important point is that it is necessary to apply shrinkage to {\em reduce} the bias that results from focusing on interesting results.

\appendix
\setcounter{proposition}{0}
\setcounter{corollary}{0}
\setcounter{theorem}{0}

\section{Appendix}
\begin{proposition}
The bias $\E( |b| \mid \se, \beta) - |\beta|$ is positive for all $\se$ and $\beta$.  Moreover, it is decreasing in $|\beta|$ and increasing in $\se$.
\end{proposition}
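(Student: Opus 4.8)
The plan is to exploit the scale invariance of the problem. Write $B(\beta,\se) = \E(|b|\mid\se,\beta) - |\beta|$ for the bias and let $\rho = |\beta|/\se \ge 0$ denote the signal-to-noise ratio. Factoring $\se$ out of the folded-normal mean given in the text, the bias takes the form $B(\beta,\se) = \se\, g(\rho)$ with
$$g(\rho) = \sqrt{\frac{2}{\pi}}\, e^{-\rho^2/2} - 2\rho\,\Phi(-\rho).$$
All three assertions then reduce to understanding the single univariate function $g$ on $[0,\infty)$.

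The key computation is $g'$. Differentiating, using $\frac{d}{d\rho}\Phi(-\rho) = -\phi(\rho)$ together with the identity $\sqrt{2/\pi}\,\rho\,e^{-\rho^2/2} = 2\rho\,\phi(\rho)$, the two terms carrying $e^{-\rho^2/2}$ cancel exactly and leave the strikingly simple
$$g'(\rho) = -2\,\Phi(-\rho) < 0 .$$
Hence $g$ is strictly decreasing on $[0,\infty)$. Monotonicity of the bias in $|\beta|$ is then immediate: for fixed $\se$, increasing $|\beta|$ increases $\rho$, so $B = \se\, g(\rho)$ decreases. For monotonicity in $\se$ at fixed $\beta$ I would apply the chain rule with $\partial\rho/\partial\se = -\rho/\se$, giving $\partial_\se B = g(\rho) - \rho\, g'(\rho)$; substituting $g'(\rho) = -2\Phi(-\rho)$ cancels the $2\rho\,\Phi(-\rho)$ terms and yields the equally clean
$$\frac{\partial B}{\partial\se} = \sqrt{\frac{2}{\pi}}\, e^{-\rho^2/2} > 0 .$$

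For positivity I would observe that $g(\rho) > 0$ is equivalent to $\phi(\rho) > \rho\,\Phi(-\rho)$, i.e.\ to the classical Mills-ratio bound $\frac{1-\Phi(\rho)}{\phi(\rho)} < \frac{1}{\rho}$ for $\rho > 0$; at $\rho = 0$ one has $g(0) = \sqrt{2/\pi} > 0$ directly. This is the only step with any content, and it is mild: for $t > \rho > 0$ we have $\phi(t) < (t/\rho)\phi(t)$, so integrating from $\rho$ to $\infty$ and using $\int_\rho^\infty t\,\phi(t)\,dt = \phi(\rho)$ gives $1-\Phi(\rho) < \phi(\rho)/\rho$ in one line. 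Alternatively, since $g' < 0$ and the two-sided Mills ratio forces $g(\rho)\to 0$ as $\rho\to\infty$, $g$ is strictly decreasing to infimum $0$ and hence strictly positive. The real work is done entirely by the two derivative identities, whose exact cancellations are what make the monotonicity claims fall out with essentially no estimation.
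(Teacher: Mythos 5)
Your proof is correct, but it takes a genuinely different route from the paper's. The paper proves this Proposition with a one-line reference: it is the special case $c=0$ of Theorem~1 on the conditional bias $\E(|b|\mid \se,\beta,|b|/\se>c)-|\beta|$, whose proof is a substantially heavier quotient-rule argument (comparing $D\,\partial_\theta N$ with $N\,\partial_\theta D$ for the truncated folded-normal mean, and invoking $z(1-\Phi(z))-\phi(z)\le 0$). You instead work directly with the closed-form unconditional bias, reduce by scale invariance to $g(\rho)=\sqrt{2/\pi}\,e^{-\rho^2/2}-2\rho\,\Phi(-\rho)$, and exploit the exact cancellations $g'(\rho)=-2\Phi(-\rho)<0$ and $\partial_\se B=\sqrt{2/\pi}\,e^{-\rho^2/2}>0$, with positivity from the Mills-ratio bound $1-\Phi(\rho)<\phi(\rho)/\rho$ (or from $g'<0$ together with $g(\rho)\to 0$). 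Both of your derivative identities check out, the monotonicity conclusions are strict, and the Mills-ratio step is proved correctly, so the argument is complete. What each approach buys: yours is elementary, self-contained, and much shorter for this particular statement, but it hinges on cancellations that are special to the untruncated case and does not extend to the conditioning event $|b|/\se>c$; the paper's approach is costlier up front but a single proof then delivers Proposition~1, Theorem~1, Corollary~1, and Proposition~2 (the Bayesian analogue) simultaneously, which is what the paper actually needs.
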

\begin{proof}
This is a special case of Theorem 1.
\end{proof}

\bigskip

\begin{theorem} 
The conditional bias $\E( |b| \mid \se, \beta, |b|/\se>c) - |\beta|$ is positive for all $\se$ and $\beta$. Moreover, it is decreasing in $|\beta|$ and increasing in $\se$ and $c$.
\end{theorem}

\begin{proof}
Let $Z$ be a standard normal random variable and define 
\begin{equation}\label{def:g}
g(\theta,c) = \E\left(|\theta+Z|-\theta \mid |\theta+ Z|\geq c\right).
\end{equation}
Since
\begin{equation}
\E( |b| \mid \se, \beta, |b|/\se>c) - |\beta| = \se g(|\beta|/\se,c)
\end{equation}
it is clear that it is enough to prove that $g(\theta,c)$ is decreasing in $\theta>0$ and increasing in $c>0$. 

Suppose $c_1<c_2$. For any random variable $X$ we have that
$$ \E(X\mid X\geq c_2) \geq \E(X),$$
since $\E(X\mid X<c_2)\leq c_2\leq \E(X\mid X\geq c_2)$ and $\E(X)$ is a convex combination of the two conditional expectations. Now we can replace $X$ by $\tilde{X}=\E(X\mid X\geq c_1)$, and we conclude that $g$ is increasing in $c>0$. 

To prove that $g(\theta,c)$ is decreasing in $\theta>0$, note that the density of $|Z + \theta|$ is given by
$$ f(y) = 
\begin{cases}
\phi(y-\theta) + \phi(y+\theta),\quad &y \geq 0\\
0, &y<0.
\end{cases}
$$
Here $\phi$ is the standard normal density. Using that $z\phi(z) = -\phi'(z)$,
\begin{align}\label{eq:condexp}
g(\theta,c) & = \frac{\int_c^\infty (y-\theta)( \phi(y-\theta) + \phi(y+\theta))\,dy}{P(|Z + \theta|\geq c)}\nonumber\\
& = \frac{\int_c^\infty -\phi' (y-\theta) -\phi' (y+\theta) - 2\theta\phi(y+\theta)\,dy}{P(|Z + \theta|\geq c)}\nonumber\\
& = \frac{\phi(c-\theta) + \phi(c+\theta) - 2\theta(1-\Phi(c+\theta))}{2-\Phi(c-\theta)-\Phi(c+\theta)}.
\end{align}
We split $g(\theta,c)$ into the numerator and the denominator:
\[ N = \phi(c-\theta) + \phi(c+\theta) - 2\theta\left(1-\Phi(c+\theta)\right)\]
and
\[ D = 2-\Phi(c-\theta)-\Phi(c+\theta).\]
Now it is enough to check that
\[ D\cdot\frac{\partial N}{\partial \theta} \leq N\cdot\frac{\partial D}{\partial \theta}.\]
So
\begin{align*}
\frac{\partial N}{\partial \theta}  &= -\phi'(c-\theta) +\phi'(c+\theta) -2( 1-\Phi(c+\theta)) +2\theta\phi(c+\theta)\\
& = (c-\theta) (\phi(c-\theta) - \phi(c+\theta)) -2( 1-\Phi(c+\theta)),
\end{align*}
and
\[ \frac{\partial D}{\partial \theta} = \phi(c-\theta) - \phi(c +\theta).\]
Introduce
\[ z_- = c-\theta \mbox{ and } z_+ = c+\theta.\]
Then
\begin{align} \label{eq:ineq}
D\cdot\frac{\partial N}{\partial \theta} &\leq N\cdot\frac{\partial D}{\partial \theta} &\iff \nonumber\\
D\cdot z_-\left(\phi\left(z_-\right) - \phi\left(z_+\right) \right)& -2D\cdot\left(1-\Phi\left(z_+\right)\right)   \leq N \cdot \left(\phi\left(z_-\right) - \phi\left(z_+\right) \right) & \iff \nonumber\\
 \left(\phi\left(z_-\right) - \phi\left(z_+\right)\right)& \left( D\cdot z_- - N\right) \leq 2D\cdot\left(1-\Phi\left(z_+\right)\right).
\end{align}
The right-hand-side of \eqref{eq:ineq} is clearly positive, and it is not hard to see that the first factor of the left-hand-side is also positive: for $|z_-|\leq z_+$ we have
\[ \phi\left(z_-\right) - \phi\left(z_+\right) \geq 0.\]
Therefore, we can show that \eqref{eq:ineq} is true, if we can show that
\begin{equation}\label{eq:neg} 
D\cdot z_- - N \leq 0.
\end{equation}
We can see that
\begin{align}\label{eq:miracle}
D\cdot z_- - N &=  z_-\left(1-\Phi(z_-)\right) + z_-\left(1-\Phi(z_+)\right) - \phi(z_-) - \phi(z_+) + 2\theta (1-\Phi(z_+))\nonumber \\
&= z_-\left(1-\Phi(z_-)\right) - \phi(z_-) + z_+\left(1-\Phi(z_+)\right) - \phi(z_+).
\end{align}
We now use the fact that for all $z\in \R$, 
\[ z(1-\Phi(z))-\phi(z)\leq 0,\]
which follows from the fact that the derivative of this function (i.e. $1-\Phi(z)$) is positive, and the limit for $z\to\infty$ equals $0$. So \eqref{eq:miracle} is indeed negative, which proves \eqref{eq:ineq}, and therefore the fact that $g(\theta,c)$ is decreasing in $\theta \geq 0$.

\end{proof}

\bigskip

\begin{corollary} 
The relative conditional bias is positive and and the exaggeration factor is greater than 1. Both depend on depend on $\beta$ and $\se$ only through the signal-to-noise ratio (SNR) $|\beta|/\se$. Both are decreasing in $|\beta|/\se$ and increasing in $c$.
\end{corollary}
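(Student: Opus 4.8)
The plan is to express both the relative conditional bias and the exaggeration factor in terms of the function $g(\theta,c)$ introduced in the proof of Theorem 1, and then read off every assertion of the corollary from the properties of $g$ already established there. Writing $\theta = |\beta|/\se$ for the SNR and interpreting the conditioning event as $|b|/\se > c$ throughout, Theorem 1 supplies the identity
\begin{equation}
\E(|b| \mid \se, \beta, |b|/\se > c) - |\beta| = \se\, g(\theta, c),
\end{equation}
together with the three facts that $g(\theta,c) > 0$, that $g$ is decreasing in $\theta > 0$, and that $g$ is increasing in $c > 0$.

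First I would substitute $|\beta| = \theta\,\se$ into the two defining ratios. For the relative conditional bias this gives
\begin{equation}
\frac{\E(|b| \mid \se, \beta, |b|/\se > c) - |\beta|}{|\beta|} = \frac{\se\, g(\theta,c)}{\theta\,\se} = \frac{g(\theta,c)}{\theta},
\end{equation}
and for the exaggeration factor
\begin{equation}
\frac{\E(|b| \mid \se, \beta, |b|/\se > c)}{|\beta|} = 1 + \frac{g(\theta,c)}{\theta}.
\end{equation}
The cancellation of $\se$ shows immediately that both quantities depend on $\beta$ and $\se$ only through $\theta$, and since $g(\theta,c) > 0$ and $\theta > 0$, the relative bias $g(\theta,c)/\theta$ is positive and the exaggeration factor exceeds $1$. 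Monotonicity in $c$ is equally direct: for fixed $\theta > 0$ the map $c \mapsto g(\theta,c)/\theta$ inherits the increasingness of $g$ in $c$, and adding the constant $1$ does not affect this.

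The one assertion that calls for a short argument is monotonicity in the SNR, since here we are dividing a function that is \emph{decreasing} in $\theta$ by one that is \emph{increasing} in $\theta$. I would argue directly from the two properties of $g$: if $0 < \theta_1 < \theta_2$, then $g(\theta_1,c) \geq g(\theta_2,c) > 0$, and therefore
\begin{equation}
\frac{g(\theta_1,c)}{\theta_1} \geq \frac{g(\theta_2,c)}{\theta_1} > \frac{g(\theta_2,c)}{\theta_2}.
\end{equation}
Hence $g(\theta,c)/\theta$ is strictly decreasing in $\theta$, and the same conclusion transfers to the exaggeration factor, which differs only by the additive constant $1$. This is the only step in which the positivity and the decreasingness of $g$ are used simultaneously, so it is the crux of the proof; nevertheless it remains entirely elementary, and no further estimates beyond those already contained in Theorem 1 are required.
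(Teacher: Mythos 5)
Your proof is correct and takes essentially the same route as the paper: both reduce the two quantities to $g(\theta,c)/\theta$ (equivalently $\se\, g(|\beta|/\se,c)/|\beta|$) and $1+g(\theta,c)/\theta$, and then invoke the positivity and monotonicity of $g$ established in Theorem 1. The only difference is that you spell out the step the paper leaves implicit, namely that dividing the positive, decreasing-in-$\theta$ function $g(\cdot,c)$ by the increasing positive factor $\theta$ yields a strictly decreasing function, which is a useful clarification rather than a different argument.
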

\begin{proof}
Recall the definition of the function $g(\theta,c)$ from (\ref{def:g}). The relative bias is equal to $\se g(|\beta|/\se,c)/|\beta|$ and the exaggeration factor is $\se g(|\beta|/\se,c)/|\beta| + 1$. We refer to the proof of Theorem 1 where we show that $g(\theta,c)$ is decreasing in $\theta \geq 0$ and increasing in $c\geq 0$. This also establishes the present claim.
\end{proof}

\bigskip

\begin{proposition} 
The difference  $\E( |\beta| \mid s, b^*) - |b^*|$ is positive. It is decreasing in $|b^*|$ and increasing in $s$. Moreover, the difference vanishes as $|b^*|$ tends to infinity.
\end{proposition}
\begin{proof}
Comparing to Proposition 1, we see that this is also a special case of Theorem 1.
\end{proof}

\bigskip

\begin{theorem}
Suppose $b$ is normally distributed with mean $\beta$ and standard deviation $\se$. If $|\beta|/\se \leq z=z_{1-\alpha/2}$ then
\begin{equation}
P(|b-\beta| / \se< z  \mid \beta, \se, |b|/\se>z) < P(|b-\beta| / \se< z \mid \beta, \se) =1-\alpha.
\end{equation}
\end{theorem}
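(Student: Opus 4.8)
The plan is to normalize the problem, reduce the two-sided conditional coverage to explicit Gaussian expressions, and then settle a single-variable inequality by a monotonicity argument. First I would exploit the symmetry under $(b,\beta)\mapsto(-b,-\beta)$ together with scale-invariance in $\se$ to reduce to the case $\se=1$ and $\beta=\theta:=|\beta|/\se\ge 0$. Writing $X=b\sim N(\theta,1)$ and $z=z_{1-\alpha/2}$, the unconditional statement is the identity $P(|X-\theta|<z)=\Phi(z)-\Phi(-z)=2\Phi(z)-1=1-\alpha$, so it remains to show that the conditional probability $P(|X-\theta|<z\mid |X|>z)$ is strictly smaller.

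Next I would compute the numerator and denominator of this conditional probability in closed form. The conditioning set $\{|X|>z\}$ is the union of the two tails, while the coverage set is the interval $(\theta-z,\theta+z)$. The key geometric observation is that for $0\le\theta\le z$ the left endpoint $\theta-z$ lies in $[-z,0]$, so the coverage interval cannot reach the left tail $\{X<-z\}$; only the right tail contributes, and the overlap is exactly $(z,\theta+z)$. This yields numerator $\Phi(z)-\Phi(z-\theta)$ and denominator $1-\Phi(z-\theta)+\Phi(-z-\theta)$. Abbreviating $P=\Phi(z)$, $A=\Phi(z-\theta)$, $B=\Phi(-z-\theta)$, clearing the positive denominator and simplifying reduces the desired strict inequality to
$$h(\theta):=(2P-1)B-(1-P)(1-2A)>0.$$

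Finally I would prove $h>0$ on $[0,z]$ by monotonicity. Differentiating gives $h'(\theta)=-(2P-1)\phi(z+\theta)-2(1-P)\phi(z-\theta)$, which is manifestly negative since $P>\tfrac12$ and $\phi>0$; hence $h$ is strictly decreasing, and its minimum on $[0,z]$ is attained at $\theta=z$, where $A=\tfrac12$ annihilates the second term and leaves $h(z)=(2P-1)\Phi(-2z)>0$. Therefore $h(\theta)\ge h(z)>0$ throughout $[0,z]$, which is the claim. The fact that $h(z)$ is \emph{strictly} positive, combined with $h$ being strictly decreasing and continuous past $z$, is also what underlies the remark that the inequality reverses (overcoverage) for somewhat larger SNR.

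I expect the main obstacle to be the bookkeeping in the second step—correctly identifying which portion of the conditioning event overlaps the coverage interval, and carrying out the algebraic reduction to a clean one-variable function—rather than the closing monotonicity argument, which is immediate once $h$ is in hand.
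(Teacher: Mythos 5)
Your proof is correct. The paper's proof shares your outer skeleton---normalize to $X\sim N(\mu,1)$ with $0<\mu\le z$, reduce to a one-variable inequality, and settle it by monotonicity in the mean plus an endpoint check at $\mu=z$---but the middle step is genuinely different. Rather than computing the conditional coverage directly, the paper applies Bayes' rule to flip the conditioning and shows instead that $P(|X|>z \mid |X-\mu|>z) > P(|X|>z)$. This is arithmetically lighter: the conditioning event $\{|X-\mu|>z\}$ is a union of two tails of equal probability $\alpha/2$, so the conditional probability is a plain average of two terms, one of which equals $1$ outright (for $\mu>0$ the tail $\{X-\mu>z\}$ lies entirely inside $\{|X|>z\}$), and no denominator-clearing is needed. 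Your direct route instead requires the overlap observation (for $0\le\theta\le z$ only the right tail contributes, with overlap $(z,\theta+z)$) and the algebraic reduction to $h(\theta)=(2P-1)B-(1-P)(1-2A)>0$; I checked this reduction, as well as $h'(\theta)=-(2P-1)\phi(z+\theta)-2(1-P)\phi(z-\theta)<0$ and $h(z)=(2P-1)\Phi(-2z)>0$, and all of it is sound. What your route buys is an explicit closed form for the conditional coverage itself (valid in fact for all $0\le\theta\le 2z$), which keeps the quantity of interest in view and makes the sharpness question quantitative; what the paper's inversion buys is a shorter computation with no overlap bookkeeping.

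One correction to your closing side remark: since $h(z)>0$ and $h$ is continuous, your own formula shows the coverage is still strictly \emph{below} nominal for $\theta$ slightly beyond $z$; strict decrease plus positivity at $z$ does not produce a reversal there. The reversal to overcoverage occurs only at the zero crossing of $h$, which lies strictly beyond $z$ (it exists because $h$ decreases to a negative limit). So your computation in fact shows that the paper's informal claim of overcoverage for SNR ``slightly larger'' than $z_{1-\alpha/2}$ should be read as ``somewhat larger.'' This does not affect the proof of the theorem itself.
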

\begin{proof}
There is no loss of generality if we assume $\beta>0$ and $\se=1$. In this proof, we will drop conditioning on $\beta$ and $\se$ from our notation. In fact, without loss of generality we will prove the corresponding statement for $X\sim N(\mu,1)$.  Also, it is more convenient to work with the complementary event $|X-\mu| > z$. Since
$$ P(|X-\mu| > z  \mid |X|>z) = \frac{P(|X|>z| \mid |X-\mu|> z)P(|X-\mu| > z)}{P(|X|>z)}$$
is suffices to prove that
$$P(|X|>z| \mid |X-\mu|> z) - P(|X|>z) > 0$$
for all $0<\mu \leq z$. Now,
\begin{align*}
P(|X|>z| &\mid |X-\mu|> z) - P(|X|>z) =\\
&= P(|X|>z| \mid X-\mu > z)/2 +P(|X|>z \mid X-\mu< -z)/2 \\
&\ \ \ - P(|X|>z) \\
&= \frac12 + P(X>z \mid X-\mu< -z)/2 +P(X< -z \mid X-\mu< -z)/2 \\
&\ \ \ - P(X>z) - P(X<-z) \\
&=\frac12 + \frac{\Phi(-z-\mu)}{2\Phi(-z)} - 1 + \Phi(z-\mu) - \Phi(-z-\mu).
\end{align*}
Taking the derivative with respect to $\mu$, it is easy to see that this expression is decreasing in $\mu>0$. Moreover, if we take $\mu=z$, then we get $\Phi(-2z)/2\Phi(-z) - \Phi(-2z)$, which is positive because $\Phi(-z)<1/2$.
\end{proof}

\bigskip

\begin{proposition}
Suppose $\beta$ has a normal prior distribution with mean 0 and standard deviation $\tau >0$. Suppose that conditionally on $\beta$, $b$ is a normally distributed with mean $\beta$ standard error $\se>0$. Let $b^*=\E(\beta \mid b)$, then
\begin{equation}
\E( |b| - |\beta|) > \E(|\beta| -  |b^*|).
\end{equation}
\end{proposition}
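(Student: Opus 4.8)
The plan is to reduce the statement to the closed-form expressions for the three half-normal means recorded immediately before the proposition, so that no further probabilistic work is needed. Since $|b|$, $|\beta|$ and $|b^*|$ are all marginally half-normal, we already have
$$\E|b^*| = \frac{\tau^2}{\sqrt{\se^2+\tau^2}}\sqrt{\tfrac{2}{\pi}}, \qquad \E|\beta| = \tau\sqrt{\tfrac{2}{\pi}}, \qquad \E|b| = \sqrt{\se^2+\tau^2}\,\sqrt{\tfrac{2}{\pi}}.$$
By linearity the claimed inequality $\E(|b|-|\beta|) > \E(|\beta|-|b^*|)$ is therefore equivalent to the purely deterministic inequality $\E|b| + \E|b^*| > 2\,\E|\beta|$ among these three fixed numbers.

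First I would introduce the ratio $r = \sqrt{\se^2+\tau^2}\,/\,\tau$, which exceeds $1$ precisely because $\se>0$. The identity already noted in the text, namely $\E|b|/\E|\beta| = \E|\beta|/\E|b^*| = r$, lets me write $\E|b| = r\,\E|\beta|$ and $\E|b^*| = \E|\beta|/r$. Substituting these gives $\E(|b|-|\beta|) = (r-1)\,\E|\beta|$ on the one hand and $\E(|\beta|-|b^*|) = (1 - 1/r)\,\E|\beta| = \tfrac{r-1}{r}\,\E|\beta|$ on the other.

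The inequality then reduces to $(r-1) > (r-1)/r$, which holds because $r>1$ makes $r-1$ strictly positive and dividing a positive quantity by $r>1$ strictly decreases it. Equivalently, one may phrase the final step as the strict arithmetic–geometric mean bound $\sqrt{\se^2+\tau^2} + \tau^2/\sqrt{\se^2+\tau^2} > 2\tau$, with equality only in the degenerate case $\se=0$. I do not expect any genuine obstacle here; the only point requiring care is strictness, which is secured by the standing assumption $\se>0$ (equivalently $r>1$). Were $\se=0$ we would have $r=1$ and the two biases would coincide, so the strict inequality is exactly the statement that a positive standard error forces the overestimation of $|b|$ to dominate the underestimation of $|b^*|$.
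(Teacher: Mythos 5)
Your proof is correct and follows essentially the same route as the paper: both reduce the claim, via the explicit half-normal means, to the scalar inequality $\sqrt{\se^2+\tau^2}+\tau^2/\sqrt{\se^2+\tau^2}>2\tau$, which the paper verifies by recognizing the perfect square $(\sqrt{\se^2+\tau^2}-\tau)^2>0$ while you verify it by the equivalent ratio argument $(r-1)>(r-1)/r$ for $r=\sqrt{\se^2+\tau^2}/\tau>1$. The two finishing steps are algebraically identical, so there is nothing substantive to distinguish them.
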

\begin{proof}
We have to show that
$$\sqrt{\se^2 + \tau^2}  - \tau \geq \tau - \frac{\tau^2}{\sqrt{\se^2 + \tau^2}}.$$
Multiplying by $\sqrt{\se^2 + \tau^2}$ and rearranging we obtain
$$\se^2 + \tau^2 -2 \tau \sqrt{\se^2 + \tau^2} + \tau^2 \geq 0.$$
The left hand side of this equality is equal to $(\sqrt{\se^2 + \tau^2}  - \tau)^2$ which is clearly positive unless $\se$ is zero.
\end{proof}

 \bigskip
 
\begin{theorem} 
As $c$ goes to infinity, $\E(|b^*| -|\beta| \mid |b|>c)$ vanishes.
\end{theorem}
\begin{proof}
Since the marginal distribution of $b$ is symmetric around zero, we have for positive $c$
\begin{equation}\E(|b^*| \mid |b|>c) = \frac{\tau^2}{\se^2 + \tau^2} \E( |b| \mid |b| >c)=\frac{\tau^2}{\se^2 + \tau^2} \E( b \mid b >c)\end{equation}
Conditionally on $b>c$, $b$ has the truncated normal distribution. Hence
\begin{align}
\frac{\tau^2}{\se^2 + \tau^2} \E( b \mid b >c) & = \frac{\tau^2}{\se^2 + \tau^2} \sqrt{\se^2 + \tau^2} \frac{\varphi(c/\sqrt{\se^2 + \tau^2})}{1-\Phi(c/\sqrt{\se^2 + \tau^2})} \notag \\
& = \frac{\tau^2 \varphi(c/\sqrt{\se^2 + \tau^2})}{\sqrt{\se^2 + \tau^2}(1-\Phi(c/\sqrt{\se^2 + \tau^2}))} \label{bias}
\end{align}
Turning to $|\beta|$, we have by symmetry, 
\begin{equation}
\E(|\beta| \mid |b| > c) = \E(|\beta| \mid b > c).
\end{equation}
Moreover, 
\begin{equation}\E(|\beta| \mid b > c) = \E(\beta \mid b>c) + \E(|\beta|-\beta\mid b>c).\end{equation}
By a result due to Rosenbaum \cite{rosenbaum1961moments} concerning the mean of a truncated bivariate normal distribution, we have
\begin{equation}
\E(\beta \mid b>c) = \frac{\tau^2 \varphi(c/\sqrt{\se^2 + \tau^2})}{\sqrt{\se^2 + \tau^2}(1-\Phi(c/\sqrt{\se^2 + \tau^2}))}.
\end{equation}
Since this expression is equal to (\ref{bias}), we only need to show that
\[ \lim_{c\to\infty} \E(|\beta|-\beta\mid b>c) = 0.\]
Since $P(\beta<0\mid b>c)\to 0$ as $c\to \infty$, this is clearly true.
\end{proof}

\bigskip
To prove Theorem 4, we use the following Lemma.

\begin{lemma}
Let $X$ be a random variable and $g$ an increasing function that is not constant on the support of $X$. Then for every $x$ such that $P(X>x)>0$
$$\E(g(X) \mid X < x) < \E(g(X)).$$
\end{lemma}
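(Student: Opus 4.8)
The plan is to use the same convex-combination idea that already appears in the proof of Theorem 1. Assuming $P(X<x)>0$ (which is needed for the left-hand side to be defined; the case $P(X<x)=0$ is vacuous), I would write the unconditional mean as a weighted average of the two conditional means,
\[
\E(g(X)) = P(X<x)\,\E(g(X)\mid X<x) + P(X\ge x)\,\E(g(X)\mid X\ge x),
\]
and observe that both weights are strictly positive, since $P(X\ge x)\ge P(X>x)>0$ by hypothesis and $P(X<x)>0$ by assumption.

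The monotonicity of $g$ then pins down the ordering of the two conditional means around $g(x)$. On the event $\{X<x\}$ we have $g(X)\le g(x)$, so $\E(g(X)\mid X<x)\le g(x)$; on $\{X\ge x\}$ we have $g(X)\ge g(x)$, so $\E(g(X)\mid X\ge x)\ge g(x)$. Hence
\[
\E(g(X)\mid X<x)\le g(x)\le \E(g(X)\mid X\ge x).
\]
Because $\E(g(X))$ is a strict convex combination of the two conditional means and the first is no larger than the second, it follows immediately that $\E(g(X)\mid X<x)\le \E(g(X))$, and it remains only to upgrade this to a strict inequality.

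For strictness I would argue by contradiction. If $\E(g(X)\mid X<x)=\E(g(X))$, then, since the weights are strictly positive and $\E(g(X)\mid X<x)\le \E(g(X)\mid X\ge x)$, the two conditional means must coincide, and the squeeze above forces both to equal $g(x)$. But $\E(g(X)\mid X<x)=g(x)$ together with $g(X)\le g(x)$ almost surely on $\{X<x\}$ forces $g(X)=g(x)$ almost surely on $\{X<x\}$, and likewise on $\{X\ge x\}$; thus $g(X)=g(x)$ almost surely. This is precisely what the hypothesis that $g$ is not constant on the support of $X$ is meant to exclude, which yields the desired strict inequality.

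The main obstacle is exactly this last step: the weak inequality holds for \emph{any} increasing $g$, so all of the real content is in showing that the non-constancy assumption rules out the degenerate case $g(X)=g(x)$ a.s. One must be slightly careful about the precise reading of ``not constant on the support,'' since a monotone $g$ could disagree with a constant only on a probability-zero set (for example at a single boundary point of the support carrying no mass); the assumption is therefore best understood as saying that $g(X)$ is not almost surely constant, which is automatic in the application where $X$ has a density. An equivalent and perhaps cleaner route to the same strict inequality is to note that $g(X)$ and the increasing indicator $\mathbf{1}_{\{X\ge x\}}$ are positively correlated, and to use a representation of the covariance in terms of two independent copies of $X$ to see that this correlation is strict precisely when $g(X)$ is non-degenerate.
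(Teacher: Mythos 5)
Your proof is correct and follows essentially the same route as the paper's: write $\E(g(X))$ as a convex combination of $\E(g(X)\mid X<x)$ and $\E(g(X)\mid X\geq x)$, order these by monotonicity of $g$, and invoke non-constancy for strictness. The only difference is that you spell out the strictness step (which the paper asserts in one line) and correctly flag that ``not constant on the support'' should be read as ``$g(X)$ not almost surely constant''---a worthwhile clarification, but not a different proof.
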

\begin{proof}
$\E(g(X))$ is a convex combination of $\E(g(X) \mid X < x)$ and $\E(g(X) \mid X \geq x)$. Since $g$ is increasing
$$\E(g(X) \mid X < x) \leq \E(g(X) \mid X \geq x).$$ 
If $g$ is not constant on the support of $X$, then the inequality is strict and the claim follows.
\end{proof}

\bigskip

To prove Theorem 4, we first prove the following Proposition.
\begin{proposition}
Suppose $\mu$ is distributed such that $|\mu|$ has a decreasing density $f$. Also suppose $Z$ is independent of $\mu$ and has a distribution which is symmetric around zero and supported on the whole real line. Let $X=Z+\mu$. For every positive $c$ and $z$
\begin{equation}
P(|X-\mu| < z \mid |X|>c) < P(|X-\mu| < z).
\end{equation}
\end{proposition}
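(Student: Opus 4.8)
The plan is to exploit the identity $X-\mu = Z$, which turns the claim into a statement purely about $|Z|$: writing $A=\{|X|>c\}$, the inequality to prove is
$$ P(|Z|<z \mid A) < P(|Z|<z). $$
Thus the entire task is to show that conditioning on $A$ stochastically enlarges $|Z|$, i.e. that $|Z|$ and the indicator of $A$ are suitably positively associated. I will capture this through the function $h(w)=P(A\mid |Z|=w)$ and show it is increasing in $w\ge 0$.

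First I would reduce to a symmetric $\mu$. Conditioning on $|\mu|=m$ and using that $Z$ is symmetric, the measure-preserving map $Z\mapsto -Z$ sends $(|Z|,|Z-m|)$ to $(|Z|,|Z+m|)$, so the joint law of $(|Z|,|Z+\mu|)$ given $|\mu|=m$ is the same whether $\mu=m$ or $\mu=-m$. Hence this joint law depends on $\mu$ only through $|\mu|$, and I may replace $\mu$ by a symmetric variable with the same distribution of $|\mu|$, that is, by a variable with the symmetric density $f_\mu(t)=\tfrac12 f(|t|)$, which is decreasing in $|t|$ since $f$ is decreasing.

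Next I would compute $h$. Splitting on the two equally likely signs of $Z$ given $|Z|=w$ gives $h(w)=\tfrac12\big(P(|\mu+w|>c)+P(|\mu-w|>c)\big)$, which rearranges to $h(w)=1-\tfrac12\big(P(|\mu+w|<c)+P(|\mu-w|<c)\big)$. By symmetry of $\mu$ the two probabilities coincide, so $h(w)=1-P(\mu\in(w-c,w+c))$, whence $h'(w)=f_\mu(w-c)-f_\mu(w+c)$. This is nonnegative because $|w-c|\le|w+c|$ for $w,c\ge 0$ and $f_\mu$ is decreasing in $|t|$, so $h$ is increasing; and since a decreasing density cannot be constant, $h$ is strictly increasing on a set of positive measure and therefore non-constant on the support $[0,\infty)$ of $|Z|$.

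Finally I would assemble the pieces with the Lemma. Conditioning on $|Z|$ yields
$$ P(|Z|<z \mid A)=\frac{\E\big(h(|Z|)\,\mathbf 1(|Z|<z)\big)}{\E\,h(|Z|)}=\frac{\E\big(h(|Z|)\mid |Z|<z\big)}{\E\,h(|Z|)}\,P(|Z|<z), $$
and applying the Lemma with $g=h$ gives $\E\big(h(|Z|)\mid |Z|<z\big)<\E\,h(|Z|)$, so the prefactor is strictly below $1$ and the claimed strict inequality follows. The main obstacle is precisely the monotonicity of $h$: it is the single place where the hypothesis that $|\mu|$ has a decreasing density is used, and it becomes transparent only after the symmetrization step, because for a non-symmetric $\mu$ the derivative $h'$ mixes values of $f_\mu$ at four points and is not obviously signed.
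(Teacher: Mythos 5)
Your proposal is correct and follows essentially the same route as the paper's proof: reduce the claim to $P(|Z+\mu|>c \mid |Z|<z) < P(|Z+\mu|>c)$, define the conditional probability function ($h$ here, $g$ in the paper), show its derivative equals $\tfrac12\left(f(|w-c|)-f(w+c)\right)\geq 0$ using the decreasing density, and conclude via the paper's Lemma 1. The only cosmetic difference is that you symmetrize $\mu$ to the density $\tfrac12 f(|t|)$ while the paper replaces $\mu$ by $|\mu|$ and handles the cases $z\leq c$ and $z>c$ separately; both are the same use of the symmetry of $Z$.
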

\begin{proof}
 Since $Z=X-\mu$, we can rewrite the claim as
 $$P(|Z| < z \mid |Z+\mu|>c) < P(|Z| < z).$$
 Because,
 $$P(|Z|<z \mid |Z+\mu|>c) = \frac{P( |Z+\mu| > c \mid |Z|<z) P(|Z|<z)}{P( |Z+\mu|>c)}$$
 the claim is equivalent to
 \begin{equation}
 P( |Z+\mu|>c \mid |Z|<z) < P( |Z+\mu|>c),
 \end{equation}
 for all positive $z$ and $c$. If we define
 $$g(z)=P( |Z+\mu|>c \mid |Z|=z)$$
 then
 $$P( |Z+\mu|>c \mid |Z|<z) = \E(g(|Z|)  \mid |Z|<z).$$ 
 Now we can use Lemma 1 to prove our claim by showing that $g$ is increasing and not constant.
 Since the distribution of $Z$ is symmetric around zero, its sign and magnitude are independent. Therefore,
 \begin{equation}
 g(z)= P( |Z+|\mu||>c \mid |Z|=z).
 \end{equation}
 Using the independence of $Z$ and $\mu$, we have
  \begin{align*}
  g(z) &=  P( |z+|\mu||>c)/2 + P( |-z+|\mu||>c)/2 \\
 &= P( z+|\mu| >c)/2 + P( z+|\mu| < -c)/2 \\
 &+ P( -z+|\mu| >c)/2 + P( -z+|\mu| < -c)/2 \\
 &= \begin{cases}
 P(|\mu| >c-z)/2 + P(|\mu| >c+z)/2,  &\text{if }z \leq c \\
 \frac12 + P(|\mu| >c+z)/2 + P(|\mu| < -c+z)/2, &\text{if }z > c 
 \end{cases}
 \end{align*}
 Taking the derivative, we have 
 \begin{align*}
  g'(z) &= 
 \begin{cases}
 f(c-z) - f(c+z),  &\text{if }z \leq c \\
 -f(c+z) + f(-c+z), &\text{if }z > c
 \end{cases}\\
 &=f(|z-c|) - f(z+c) \geq 0.
 \end{align*}
 $f$ is not constant since it is a (proper) density. It follows that $g'$ cannot be identically zero and hence $g$ is not constant either.
\end{proof}

\bigskip

\begin{theorem}
Suppose $\beta$ and $\se$ are distributed such that $|\beta|/\se$ has a decreasing density and $|\beta|/\se$ and $\se$ are independent. Also suppose that conditionally on $\beta$ and $\se$, $b$ is normally distributed with mean $\beta$ and standard deviation $\se$. For every $0<\alpha < 1$
\begin{equation}
P(|b-\beta| < z_{1-\alpha/2}\se \mid  |b|/\se>c) < 1-\alpha.
\end{equation}
\end{theorem}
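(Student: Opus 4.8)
The plan is to reduce this statement to the Proposition proved just above by passing to standardized units, so that the random standard error $\se$ drops out of every event. First I would set $X = b/\se$, $\mu = \beta/\se$ and $Z = (b-\beta)/\se$, so that $X = \mu + Z$. Under this change of variables the conditioning event becomes $\{|b|/\se > c\} = \{|X| > c\}$, and the coverage event becomes $\{|b-\beta| < z_{1-\alpha/2}\se\} = \{|X-\mu| < z_{1-\alpha/2}\}$. Thus the quantity to be bounded is exactly $P(|X-\mu| < z \mid |X| > c)$ with $z = z_{1-\alpha/2}$, which is precisely the left-hand side of the Proposition.

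The second step is to check that $(\mu, Z)$ satisfies the hypotheses of the Proposition. The magnitude $|\mu| = |\beta|/\se$ has a decreasing density by assumption, so the requirement on $\mu$ holds. For the noise, I would observe that conditionally on $(\beta,\se)$ the variable $Z = (b-\beta)/\se$ is standard normal, and since this conditional law is $N(0,1)$ irrespective of the value of $(\beta,\se)$, it follows that $Z$ is independent of $(\beta,\se)$, hence of $\mu = \beta/\se$. In particular $Z$ is symmetric about zero, supported on all of $\R$, and independent of $\mu$, as the Proposition demands. If one prefers to use the independence of $|\beta|/\se$ and $\se$ explicitly, the same conclusion is reached by first conditioning on $\se$: given $\se$, the signal-to-noise ratio $|\beta|/\se$ retains its decreasing density and $Z$ is $N(0,1)$ independent of $\mu$, so the Proposition applies conditionally, after which one averages over the conditional law of $\se$ given $|X|>c$.

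With both hypotheses verified, applying the Proposition gives $P(|X-\mu| < z \mid |X|>c) < P(|X-\mu| < z)$. Since $X - \mu = Z \sim N(0,1)$, the right-hand side equals $P(|Z| < z_{1-\alpha/2}) = 2\Phi(z_{1-\alpha/2}) - 1 = 1-\alpha$, which yields the claim. I do not expect any serious obstacle at the level of Theorem 4 itself: all of the analytic difficulty has already been absorbed into the Proposition, through Lemma 1 and the monotonicity of $g$. The one point that genuinely requires care is the independence of the standardized noise $Z$ from the standardized mean $\mu$, since $\se$ appears in the denominator of both; this independence is exactly what lets the standardized problem match the template of the Proposition, so it is worth spelling out rather than asserting.
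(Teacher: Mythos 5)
Your proposal is correct, and it rests on the same key result as the paper: the Proposition (proved via Lemma 1) that $P(|X-\mu| < z \mid |X|>c) < P(|X-\mu| < z)$ when $|\mu|$ has a decreasing density and $X-\mu$ is symmetric, fully supported, and independent of $\mu$. The difference lies in how you reduce the theorem to that Proposition. The paper fixes $\se$, applies the Proposition conditionally on $\se$ --- this is exactly where the assumed independence of $|\beta|/\se$ and $\se$ is used, to guarantee that the conditional density of the SNR given $\se$ is still decreasing --- and then averages over the conditional law of $\se$ given significance. Your primary route standardizes once and for all: with $\mu = \beta/\se$ and $Z = (b-\beta)/\se$, you note that $Z$ is independent of the pair $(\beta,\se)$, since its conditional law is $N(0,1)$ no matter the value of that pair, hence independent of $\mu$; and both the coverage event and the selection event are functions of $(\mu,Z)$ alone. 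This lets you invoke the Proposition a single time, with no conditioning on $\se$ and no averaging step. A consequence worth making explicit: your argument never uses the assumed independence of $|\beta|/\se$ and $\se$, so it establishes the theorem under strictly weaker hypotheses (only the decreasing density of the SNR is needed), whereas the paper's conditional route genuinely relies on that independence. Your fallback argument (condition on $\se$, apply the Proposition conditionally, then average over the conditional law of $\se$ given $|b|/\se>c$) is precisely the paper's proof, so both routes are sound; the direct one is the cleaner of the two.
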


\begin{proof}
For every $\se>0$, it follows from Proposition 1 that
$$P(|X-\mu| < z_{1-\alpha/2}\se \mid \se, |X|/\se>c) < P(|X-\mu| < z_{1-\alpha/2}\se \mid \se) = 1-\alpha.$$
Averaging over $\se$, the claim follows.
\end{proof}

%\printbibliography[title={\bf{References}}]
%\bibliographystyle{plain}
%\bibliography{literature}

\begin{thebibliography}{10}

\bibitem{amrhein2018remove}
Valentin Amrhein and Sander Greenland.
\newblock Remove, rather than redefine, statistical significance.
\newblock {\em Nature Human Behaviour}, 2(1):4, 2018.

\bibitem{barnett2019examination}
Adrian~Gerard Barnett and Jonathan~D Wren.
\newblock Examination of cis in health and medical journals from 1976 to 2019:
  an observational study.
\newblock {\em BMJ Open}, 9(11), 2019.

\bibitem{benjamin2018redefine}
Daniel~J Benjamin, James~O Berger, Magnus Johannesson, Brian~A Nosek, E-J
  Wagenmakers, Richard Berk, Kenneth~A Bollen, Bj{\"o}rn Brembs, Lawrence
  Brown, Colin Camerer, et~al.
\newblock Redefine statistical significance.
\newblock {\em Nature Human Behaviour}, 2(1):6, 2018.

\bibitem{button2013power}
Katherine~S Button, John~PA Ioannidis, Claire Mokrysz, Brian~A Nosek, Jonathan
  Flint, Emma~SJ Robinson, and Marcus~R Munaf{\`o}.
\newblock Power failure: why small sample size undermines the reliability of
  neuroscience.
\newblock {\em Nature Reviews Neuroscience}, 14(5):365, 2013.

\bibitem{dumas2017low}
Estelle Dumas-Mallet, Katherine~S Button, Thomas Boraud, Francois Gonon, and
  Marcus~R Munaf{\`o}.
\newblock Low statistical power in biomedical science: a review of three human
  research domains.
\newblock {\em Royal Society open science}, 4(2):160254, 2017.

\bibitem{fisher1992arrangement}
Ronald~A Fisher.
\newblock The arrangement of field experiments.
\newblock In {\em Breakthroughs in statistics}, pages 82--91. Springer, 1992.

\bibitem{gelman2014beyond}
Andrew Gelman and John Carlin.
\newblock Beyond power calculations: Assessing type s (sign) and type m
  (magnitude) errors.
\newblock {\em Perspectives on Psychological Science}, 9(6):641--651, 2014.

\bibitem{ghosh2008estimating}
Arpita Ghosh, Fei Zou, and Fred~A Wright.
\newblock Estimating odds ratios in genome scans: an approximate conditional
  likelihood approach.
\newblock {\em The American Journal of Human Genetics}, 82(5):1064--1074, 2008.

\bibitem{goring2001large}
Harald~HH G{\"o}ring, Joseph~D Terwilliger, and John Blangero.
\newblock Large upward bias in estimation of locus-specific effects from
  genomewide scans.
\newblock {\em The American Journal of Human Genetics}, 69(6):1357--1369, 2001.

\bibitem{ioannidis2008most}
John~PA Ioannidis.
\newblock Why most discovered true associations are inflated.
\newblock {\em Epidemiology}, 19(5):640--648, 2008.

\bibitem{ioannidis2019importance}
John~PA Ioannidis.
\newblock The importance of predefined rules and prespecified statistical
  analyses: Do not abandon significance.
\newblock {\em Jama}, 321(21):2067--2068, 2019.

\bibitem{lehmann2006testing}
Erich~L Lehmann and Joseph~P Romano.
\newblock {\em Testing statistical hypotheses}.
\newblock Springer Science \& Business Media, 2006.

\bibitem{leon2011role}
Andrew~C Leon, Lori~L Davis, and Helena~C Kraemer.
\newblock The role and interpretation of pilot studies in clinical research.
\newblock {\em Journal of psychiatric research}, 45(5):626--629, 2011.

\bibitem{mcshane2019abandon}
Blakeley~B McShane, David Gal, Andrew Gelman, Christian Robert, and Jennifer~L
  Tackett.
\newblock Abandon statistical significance.
\newblock {\em The American Statistician}, 73(sup1):235--245, 2019.

\bibitem{meehl1978theoretical}
Paul~E Meehl.
\newblock Theoretical risks and tabular asterisks: Sir karl, sir ronald, and
  the slow progress of soft psychology.
\newblock {\em Journal of Consulting and Clinical Psychology}, 46:806--834,
  1978.

\bibitem{rosenbaum1961moments}
S~Rosenbaum.
\newblock Moments of a truncated bivariate normal distribution.
\newblock {\em Journal of the Royal Statistical Society. Series B
  (Methodological)}, pages 405--408, 1961.

\bibitem{rozeboom1960fallacy}
William~W Rozeboom.
\newblock The fallacy of the null-hypothesis significance test.
\newblock {\em Psychological bulletin}, 57(5):416, 1960.

\bibitem{van2019default}
Erik~Willem van Zwet.
\newblock A default prior for regression coefficients.
\newblock {\em Statistical methods in medical research}, 28(12):3799--3807,
  2019.

\bibitem{wasserstein2016asa}
Ronald~L Wasserstein and Nicole~A Lazar.
\newblock The asa's statement on p-values: context, process, and purpose, 2016.

\bibitem{wasserstein2019moving}
Ronald~L Wasserstein, Allen~L Schirm, and Nicole~A Lazar.
\newblock Moving to a world beyond ``p<0.05'', 2019.

\bibitem{xu2011bayesian}
Lizhen Xu, Radu~V Craiu, and Lei Sun.
\newblock Bayesian methods to overcome the winner's curse in genetic studies.
\newblock {\em The Annals of Applied Statistics}, pages 201--231, 2011.

\end{thebibliography}

\end{document}